\def\final{0}  
\def\iflong{\iffalse}
\newcommand{\knote}[1]{[{\tiny karthik: \bf #1}]\marginpar{*}}
\newcommand{\sidecomment}[1]{}
\newcommand{\knote}[1]{}
\newcommand{\sidecomment}[1]{}
\newtheorem{theorem}{Theorem}
\newtheorem{lemma}[theorem]{Lemma}
\newtheorem{claim}[theorem]{Claim}
\def\reals{\mathbb{R}}
\def\E{\mathbb{E}}
\def\eps{\epsilon}
\def\grade{\gamma}
\newcommand{\abs}[1]{\left|#1\right|}
\newcommand{\prob}[1]{{\sf Pr}\left(#1\right)}
\newcommand{\expec}[1]{\operatorname{\mathbb{E}}\left(#1\right)}
\title{Finding a most biased coin with fewest flips}
\author{Karthekeyan Chandrasekaran \footnote{karthe@seas.harvard.edu, Harvard University. This work was done while the author was a visiting researcher at ICSI, Berkeley.}
\and 
Richard Karp \footnote{karp@icsi.berkeley.edu, University of California, Berkeley.}
}
\date{}
\begin{document}

\maketitle

\begin{abstract}
We study the problem of learning a most biased coin among a set of coins by tossing the coins adaptively. The goal is to minimize the number of tosses until we identify a coin $i^*$ whose posterior probability of being most biased is at least $1-\delta$ for a given $\delta$. Under a particular probabilistic model, we give an {\em optimal} algorithm, i.e., an algorithm that minimizes the expected number of future tosses. The problem is closely related to finding the best arm in the multi-armed bandit problem using adaptive strategies. Our algorithm employs an optimal adaptive strategy -- a strategy that performs the best possible action at each step after observing the outcomes of all previous coin tosses. Consequently, our algorithm is also optimal for any starting history of outcomes. To our knowledge, this is the first algorithm that employs an optimal adaptive strategy under a Bayesian setting for this problem. Our proof of optimality employs tools from the field of Markov games.

\end{abstract}


\section{Introduction}

The multi-armed bandit problem is a classical decision-theoretic problem with applications in bioinformatics, medical trials, stochastic algorithms, etc. \cite{MAB-book-GGW}. The input to the problem is a set of arms, each associated with an unknown stochastic reward. At each step, an agent chooses an arm and receives a reward. The objective is to find a strategy for choosing the arms in order to achieve the best expected reward asymptotically. This problem has spawned a rich literature on the trade off between exploration and exploitation while choosing the arms \cite{BF85, LR85,ACF02,ABFS03}. 

The motivation to identify the best bandit arm arises from problems where one would like to minimize regret within a fixed budget. In the models considered in \cite{BMS09, ABM10,GGLB11}, 
the goal is to choose an arm after a finite number of steps to minimize regret. Here regret is defined to be the difference between the expected reward of the chosen arm and the expected reward of the optimal arm. 
The work of \cite{BMS09} suggested that the exploration-exploitation trade offs for this setting are much different from the setting where the number of steps is asymptotic. Following this, Audibert et al. \cite{ABM10} proposed exploration strategies to perform essentially as well as the best strategy that knows all distributions up to permutations of the arms. Gabillon et al. \cite{GGLB11} addressed the problem of identifying the best arm for each bandit among a collection of bandits within a fixed budget. They proposed strategies that focus on arms whose expected rewards are closer to that of the optimal arm and show an upper bound on the probability of error for these strategies that decreases exponentially with the number of steps allowed. 

In contrast, one could also attempt to optimize the budget subject to the quality of the arm to be identified. This is identical to racing and action elimination algorithms \cite{hoeffding-races-MM94,action-elimination-EMM06,EMM02} which address the sample complexity of the pure exploration problem -- given any $\delta>0$, identify the arm with maximum expected reward with error probability at most $\delta$ while minimizing the total number of steps needed. This PAC-style learning formulation was introduced by Even-Dar et al. \cite{EMM02}. Given a collection of $n$ arms, Even-Dar et al. \cite{EMM02} showed that a total of $O((n/\eps^2)\log(1/\delta))$ steps is sufficient to identify an arm whose expected reward is at most $\eps$ away from the optimal arm with correctness at least $1-\delta$. Mannor and Tsitsiklis \cite{MT04} showed lower bounds matching up to constant factors under various settings of the rewards. We attempt to bridge the constant factor gap by addressing the problem from a decision-theoretic perspective. Given the history of outcomes, does there exist a strategy to choose an arm so that the expected number of steps needed to learn the best arm is minimized? Our notion of learning the best arm is to identify an arm whose posterior probability of being the most-rewarding is at least $1-\delta$.

Although the PAC-style learning problem appears to have garnered the interest of the learning theory community only over the past decade \cite{EMM02,MT04,C05,BMS09,ABM10,GGLB11}, it has been actively studied in the field of operations research for several decades as the ``ranking and selection problem''. It was introduced for normally distributed rewards by Bechhofer \cite{bechhofer54}. 
Adaptive strategies for this problem, known as ``sequential selection'', can be traced back to Paulson \cite{paulson64}. Variants of the problem find applications in minimizing the number of experimental simulations to achieve a given confidence level \cite{paulson64,book-ranking-selection-BSG95,kim-nelson,ranking-selection-simulation-BNK03,selection-for-simulation-PN01}. A simple and interesting case of the problem is when the most rewarding arm and the second-most rewarding arm differ in their mean rewards by at least $\eps>0$. This special case is known as the ``indifference-zone'' assumption \cite{bechhofer54}. Strategies and their measure of optimality are known for various relaxations of independence, normality, equal and known variances and indifference-zone assumptions \cite{kim-nelson}. In the Bayesian setting, the mean rewards of the normal distributions are chosen from some underlying distribution \cite{bayesian-R-S-GM96,bayesian-R-S-CI01,bayesian-R-S-FPD08,bayesian-R-S-CG09}. In this work, we address a particular Bayesian setting for Bernoulli rewards satisfying the indifference-zone assumption.

If the rewards from the bandit arms are Bernoulli, then learning the arm with the maximum expected reward is equivalent to learning the most biased coin by tossing them adaptively. So, we will focus on this problem for the rest of the paper. 
Under the indifference zone assumption, Chernoff bound leads to a trivial upper bound on the number of tosses in the non-adaptive setting -- toss each coin $(4/\eps^2)\log{(n/\delta)}$ times and output the coin with the maximum number of heads outcomes. Let $\hat{p_i}$ denote the empirical probability of heads for the $i$th coin. By Chernoff bound, $|\hat{p_i}-p_i|\le \eps/2$ with probability at least $1-\delta/n$. Therefore, by the union bound, it follows that this trivial toss-each-coin-$k$-times strategy outputs the most biased coin with probability at least $1-\delta$.

In this work, we give a simple yet optimal strategy for choosing coins to toss in a particular Bayesian setting. Our strategy is optimal in the sense that given a current history of outcomes of all coins and a threshold, it minimizes the expected number of tosses to find a coin whose posterior probability of being a most-biased coin is at or above the threshold. 
Our main contribution is a proof of optimality by employing tools from the field of Markov games. 
We also bound the expected number of coin tosses performed by our strategy. 
To the best of our knowledge, this is the first provably optimal strategy under a Bayesian setting of the problem with indifference zone assumption. \\ 


\noindent {\bf Setting.} A coin is said to be {\em heavy} if the probability of heads for the coin is $p+\eps$ and {\em not-heavy} if the heads probability is $p-\eps$ for some given $\eps \in (0,1/2)$ and $p\in[\eps,1-\eps]$. We are given an infinite collection of coins where each coin in the collection is heavy with probability $\alpha$ and not-heavy with probability $1-\alpha$. Given $\delta>0$, the algorithm is allowed to toss coins adaptively and has to necessarily perform a coin toss until it identifies a coin whose posterior probability of being heavy is at least $1-\delta$ (i.e., a coin $i$ for which $\prob{\text{Coin i is heavy }|\text{ Outcomes all coin tosses}}\ge 1-\delta$). The goal is to minimize the expected number of tosses required. 


An adaptive strategy is allowed to choose which coin to toss after observing the history of outcomes of all previous coin tosses. Given the history of outcomes of coin tosses, the cost of an adaptive strategy is equal to the expected number of future coin tosses needed by following this strategy so that it identifies a coin whose posterior probability of being heavy is at least $1-\delta$. An adaptive strategy is said to be optimal if it has the minimum cost. 

\subsection{Results}
Our main result is an optimal adaptive algorithm for the above setting.
\begin{theorem}\label{thm:algorithm-optimality}
Given $\delta>0$, there exists an algorithm $A$ that employs an optimal adaptive strategy in tossing coins
to identify a coin whose posterior probability of being heavy is at least $1-\delta$. 
At any step, the time taken by $A$ to identify the coin to toss is $O(1)$.
\end{theorem}

We also quantify the number of tosses performed by our optimal adaptive algorithm. We assume an infinite supply of coins under the same probabilistic setting. 
Let $q:=1-p$, $\Delta_H:=\log{((p+\eps)/(p-\eps))}$, $\Delta_T:=\log{((q+\eps)/(q-\eps))}$, $B(\delta):=\log{((1-\alpha)(1-\delta)/\alpha\delta)}$. Let $\delta_0$ be determined as follows: Consider the unique real value $\rho\in (0,1)$ such that $\rho^{\Delta_H}(p+\eps)+\rho^{-\Delta_T}(q-\eps)=1$ (the existence and uniqueness of $\rho$ is elaborated in Section \ref{sec:quantification}). Fix $\delta_0$ to be the largest real value such that $(1-\rho^{B(\delta)+\Delta_H})/(1-\rho^{B(\delta)+\Delta_T})<2$ and $B(\delta)\ge \Delta_H$.


\begin{theorem}\label{thm:no-of-tosses}
For every $\delta\in (0,\delta_0]$, 
the expected number of tosses performed by $A$ to identify a coin whose posterior probability of being heavy is at least $1-\delta$ in the above setting, is at most
\[
\frac{16}{\eps^2}\left(\frac{1-\alpha}{\alpha}+\log{\left(\frac{(1-\alpha)(1-\delta)}{\alpha \delta}\right)}\right).
\]
\end{theorem}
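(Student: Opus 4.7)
The plan is to exhibit a specific, not-necessarily-optimal adaptive strategy $A'$ that outputs a heavy coin with probability at least $1-\delta$ and whose expected number of tosses is bounded by the quantity in the theorem. Since the algorithm $A$ of Theorem~\ref{thm:algorithm-optimality} is optimal for exactly this objective, its expected number of tosses is at most that of $A'$, which yields the theorem.

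The strategy $A'$ I would analyze is the natural sequential probability ratio test applied to a stream of fresh coins: for the coin currently in hand, maintain the log-likelihood ratio
\[
L \;=\; h\log\frac{p+\eps}{p-\eps} + t\log\frac{1-p-\eps}{1-p+\eps},
\]
where $h$ and $t$ are its current head and tail counts (so $L$ starts at $0$). Toss the current coin while $\ell < L < u$; if $L \ge u$ output this coin and halt, and if $L \le \ell$ discard and start on a fresh coin. Choosing the upper barrier $u = \log\frac{(1-\alpha)(1-\delta)}{\alpha\delta}$ makes the posterior probability $\alpha e^L / (\alpha e^L + (1-\alpha))$ of the output coin being heavy at least $1-\delta$; this guarantees correctness regardless of how many coins were discarded earlier, since each coin's posterior depends only on its own tosses. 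The lower barrier $\ell$ is fixed as a small negative constant, with the exact value chosen at the end to balance the two terms of the bound.

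I would then bound the expected number of tosses in two pieces. Conditional on the current coin being heavy, $L$ is a random walk with per-step drift equal to the KL divergence $d_+ := D(p+\eps \,\|\, p-\eps)$; Pinsker's inequality gives $d_+ \ge 8\eps^2$, and optional stopping applied to the martingale $L_t - d_+ t$ bounds the expected number of tosses on this coin by $(u+O(1))/d_+$, which is $O\!\bigl(\frac{1}{\eps^2}\log\frac{(1-\alpha)(1-\delta)}{\alpha\delta}\bigr)$. Conditional on not-heavy, the drift is $-d_- \le -8\eps^2$ and the symmetric argument yields expected tosses $O((|\ell|+1)/\eps^2)$ per discarded coin. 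A gambler's-ruin calculation using the exponential martingale $e^{-L}$ then bounds the probability that a heavy coin is discarded before its walk reaches $u$; choosing $\ell$ to be a suitable negative constant keeps this discard probability bounded away from $1$, so a given heavy coin is output with probability $\Omega(1)$. Hence in expectation the strategy examines $O(1)$ heavy coins and $O((1-\alpha)/\alpha)$ not-heavy coins before halting, and the total expected tosses take the form $\frac{C}{\eps^2}\bigl(\tfrac{1-\alpha}{\alpha} + \log\tfrac{(1-\alpha)(1-\delta)}{\alpha\delta}\bigr)$.

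The main obstacle is chasing constants through the three sources of slack: the drift lower bound (Pinsker contributing a factor $8$), the overshoot at the two barriers (each crossing can exceed the barrier by up to one log-likelihood increment), and the joint choice of $\ell$ that trades off the per-discarded-coin cost against the probability of accidentally discarding a heavy coin. Getting the stated constant $16$ rather than something merely of the same order is where I expect the bookkeeping to be most delicate; everything else is a fairly standard random-walk / SPRT analysis.
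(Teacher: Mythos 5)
Your proposal is correct in outline and arrives at the same basic decomposition as the paper: a renewal argument over a stream of coins, a gambler's-ruin analysis of the log-likelihood walk for each coin, and a combination of the form $\frac{1-\alpha}{\alpha}\cdot\frac{C}{\pi}+\frac{D}{\pi}$. Two differences are worth noting. First, the paper never passes through a surrogate strategy or the optimality of Theorem~\ref{thm:algorithm-optimality}: under the infinite-supply assumption, Likelihood-Toss \emph{is} the sequential test you describe, except that its lower barrier is forced to be $0$ (a fresh coin has log-likelihood $0$, so a coin is abandoned the instant its log-likelihood goes negative); you could equally well take $A=A'$ outright, since the theorem only asserts existence. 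Second, and more substantively, your constant-width lower barrier $\ell$ rebalances the two factors: with $|\ell|=\Theta(1)$ a heavy coin survives with probability $\Omega(1)$ but each discarded coin costs $\Theta(1/\eps^2)$ tosses, whereas with the barrier at $0$ a heavy coin survives only with probability $\pi=\Theta(\eps)$ (part (1) of Lemma~\ref{lem:absorption-bounds}) while each discarded coin costs only $O(1/\eps)$ tosses (part (3)); the products coincide, but your claim that the strategy ``examines $O(1)$ heavy coins'' holds for your $A'$ and not for the paper's algorithm. Your toolbox (Wald's identity plus Pinsker for the drift, the exponential martingale $e^{-L}$ for the ruin probability) replaces the paper's appeal to Theorem~\ref{thm:gamblers-ruin}; these are interchangeable here. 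The one real trap in the constant-chasing is that your per-discarded-coin bound $(|\ell|+\Delta_T)/(8\eps^2)$ carries the overshoot $\Delta_T=\log((q+\eps)/(q-\eps))$ in the numerator, which is unbounded as $q\to\eps$, so in that regime you must divide by the exact drift $\Delta_T(q+\eps)-\Delta_H(p-\eps)$ rather than the Pinsker bound $8\eps^2$ --- this is precisely what the paper's inequality $\frac{2}{\eps}\ge\frac{\Delta_H+\Delta_T}{\Delta_T(q+\eps)-\Delta_H(p-\eps)}$ is doing. You correctly identified the bookkeeping as the delicate part; that is indeed where it lives.
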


The implications of our upper bound when the number of coins is bounded but much larger than $1/\alpha$ needs to be contrasted with the lower bounds by \cite{MT04}. In this case, setting $n=c/\alpha$ in the above expression 
suggests that our algorithm beats the lower bound shown in Theorem 9 of \cite{MT04}.
We observe that Theorem 9 of \cite{MT04} shows a lower bound in the most general Bayesian setting -- there exists a prior distribution of the probabilities of the $n$ coins so that any algorithm requires at least $O((n/\eps^2)\log{(1/\delta)})$ tosses in expectation. However, our algorithm works in a particular Bayesian setting by exploiting prior knowledge about this setting.


\subsection{Algorithm}

At any stage of the algorithm, let the history of outcomes of a coin $i$ be given by $D_i:=(h_i,t_i)$ where $h_i$ and $t_i$ refer to the number of outcomes that were heads and tails respectively. Given the history $D_i$, we define the likelihood ratio of the coin to be
\[
L_i:=\frac{\prob{\text{Coin $i$ is heavy}|D_i}}{\prob{\text{Coin $i$ is not-heavy}|D_i}}
=\left(\frac{p+\eps}{p-\eps}\right)^{h_i} \left(\frac{q-\eps}{q+\eps}\right)^{t_i}.
\]
\begin{center}
\fbox{\parbox{5.8in}{
\begin{minipage}{5.8in}
\begin{tt}
\noindent {\bf Algorithm Likelihood-Toss}

\begin{enumerate}
\item Initialize $L_i=1$ for the $i$'th coin.
\item While 
($L_i<(1-\alpha)(1-\delta)/\alpha\delta\ \forall\ i\in [n]$)
\begin{enumerate}
\item Toss coin $i^*$ such that $i^*=\arg\max\{L_i:i\in [n]\}$. (Break ties arbitrarily). Let
\begin{equation*}
b_{i^*}=
\begin{cases}
1 & \text{if outcome is heads},\\
0 & \text{if outcome is tails}.
\end{cases}
\end{equation*}

\item Update $L_{i^*}\leftarrow L_{i^*}\left(\frac{p+\eps}{p-\eps}\right)^{b_{i^*}}\left(\frac{1-p-\eps}{1-p+\eps}\right)^{1-b_{i^*}}$.
\end{enumerate}
\item Output the coin $i$ with maximum $L_i$.
\end{enumerate}

\end{tt}
\end{minipage}
}}
\end{center}


\section{Preliminaries}
Our proof of optimality is based on an optimal strategy for multitoken Markov games. We now formally define the multitoken Markov game and state the optimal strategy that has been studied for this game. We use the notation and results from \cite{online-optimal-strategy-DTW03}.

A {\em Markov system} $S=(V,P,C,s,t)$ consists of a state space $V$, a transition probability function $P:V\times V\rightarrow [0,1]$, a positive real cost $C_v$ associated with each state $v$, a start state $s$ and a target state $t$. Let $v(0),v(1),\ldots,v(k)$ denote a set of states taken by following the Markov system for $k$ steps. The cost of such a trip on $S$ is the sum $\sum_{i=0}^{k-1}C_{v(i)}$ of the costs of the exited states. 

Let $S_1,\ldots,S_n$ be $n$ Markov systems, each of which has a token on its starting state. A {\em simple multitoken Markov game} $G=S_1\circ S_2\circ \cdots\circ S_n$ consists of a succession of steps in which we choose one of the $n$ tokens, which takes a random step in its system (i.e., according to its $P_i$). After choosing a token $i$ on state $u$ say, we pay the cost $C_i(u)$ associated with the state $u$ of the system $S_i$. We terminate as soon as one of the tokens reaches its target state for the first time. A strategy denotes the policy employed to pick a token given the state of the $n$ Markov systems. The cost of such a game $\E[G]$ is the minimum expected cost taken over all possible strategies. The strategy that achieves the minimum expected cost is said to be {\em optimal}. A strategy is said to be {\em pure} if the choice of the token at any step is deterministic (entirely determined by the state of all Markov systems).

\begin{theorem}\label{thm:pure-optimal-strategy}\cite{online-optimal-strategy-DTW03}
Every Markov game has a pure optimal strategy. 
\end{theorem}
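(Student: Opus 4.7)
The plan is to recast the multitoken game as a Markov decision process on the product state space $V_1 \times \cdots \times V_n$ (where $V_i$ is the state space of $S_i$): the action set is $[n]$ (choice of token), the one-step cost of action $i$ at joint state $(u_1,\ldots,u_n)$ is $C_i(u_i)$, and any joint state in which some $u_i = t_i$ is absorbing with zero further cost. Let $J^\ast(u_1,\ldots,u_n)$ denote the infimum, over all (possibly randomized, possibly history-dependent) strategies, of the expected remaining cost from $(u_1,\ldots,u_n)$. The implicit assumption that $\E[G]$ is finite forces $J^\ast$ to be finite on every state reachable from the start.

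The key step is to establish the Bellman optimality equation
\[
J^\ast(u_1,\ldots,u_n) \;=\; \min_{i \in [n]} \Bigl( C_i(u_i) + \sum_{v} P_i(u_i, v)\, J^\ast(u_1,\ldots, v, \ldots, u_n) \Bigr)
\]
at every non-terminal joint state, with $J^\ast \equiv 0$ at terminal states. The ``$\ge$'' direction is immediate: unfold any candidate strategy by one step, apply linearity of expectation over the first random transition, and lower bound the conditional tail cost by $J^\ast$ at the post-transition joint state. The ``$\le$'' direction follows by, for any $\eta>0$, committing to a minimizing action at the current state and then running an $\eta$-optimal tail strategy from the next; sending $\eta \to 0$ closes the gap. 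Next, define the deterministic state-only strategy
\[
\pi^\ast(u_1,\ldots,u_n) \;:=\; \arg\min_{i \in [n]} \Bigl( C_i(u_i) + \sum_v P_i(u_i, v)\, J^\ast(u_1,\ldots, v, \ldots, u_n) \Bigr),
\]
breaking ties by smallest index. Finiteness of $[n]$ guarantees the $\arg\min$ is attained, so $\pi^\ast$ is pure and well-defined. A standard induction on the truncated $k$-step expected cost under $\pi^\ast$, combined with monotone convergence as $k \to \infty$ (valid because all costs are strictly positive), shows that the expected cost incurred by $\pi^\ast$ from every state equals $J^\ast$ at that state, so $\pi^\ast$ is optimal.

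The main obstacle is justifying the Bellman equation and the monotone-convergence step when the individual state spaces $V_i$ may be infinite, since then the relevant infima, series, and limits have to be interchanged carefully; this uses positivity of the costs together with a properness-type assumption that the target is reachable in finite expected cost from each state under some strategy, which is exactly what ensures $J^\ast < \infty$. For finite state spaces both steps are elementary, and in the coin-flipping application of interest each $S_i$ is countable with strictly positive costs and is naturally proper, so no additional hypotheses are needed.
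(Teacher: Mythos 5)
The paper does not prove this statement at all: it is imported verbatim from the cited reference \cite{online-optimal-strategy-DTW03}, so there is no in-paper argument to compare yours against. Your proof is the standard dynamic-programming route for minimizing nonnegative total costs (the ``negative dynamic programming'' model of Strauch): establish the Bellman equation for the optimal value $J^\ast$ on the product chain, take the greedy (conserving) stationary deterministic policy, and use the identity $J^\ast(x_0)=\E\bigl[\sum_{t<k}c_t\bigr]+\E\bigl[J^\ast(x_k)\bigr]\ge \E\bigl[\sum_{t<k}c_t\bigr]$ together with monotone convergence in $k$ to conclude the greedy policy attains $J^\ast$. This is sound: in the cost-minimization setting a conserving policy is automatically optimal (no equalizing condition is needed, precisely because $J^\ast\ge 0$ supplies the one-sided bound), and finiteness of $[n]$ makes the $\arg\min$ attainable. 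Two small points deserve explicit mention rather than a wave of the hand: the ``$\le$'' half of the Bellman equation requires selecting $\eta$-optimal continuation strategies at each successor state, which needs a measurable-selection remark in general but is immediate when the reachable state set is countable (as it is here, even though the paper's $V$ is formally all reals at most $B$, only countably many states are reachable from $s=0$); and your invocation of $\E[G]<\infty$ is a convenience, not a necessity --- the conserving-implies-optimal argument goes through even where $J^\ast=\infty$. What your approach buys relative to simply citing the reference is a self-contained and fairly general proof; what it does not give is the finer structural result the paper actually relies on, namely Theorem \ref{thm:gittins-strategy}, which identifies the optimal pure strategy via the grade $\grade$ and requires the index-style argument of the cited work.
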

For any strategy $\pi$ for a Markov game $G$, we denote the expected cost incurred by playing $\pi$ on $G$ by $\E_{\pi}[G]$.

The pure optimal strategy in the multitoken Markov game is completely determined by the {\em grade} $\grade$ of the states of the systems. The grade $\gamma$ of a state is defined as follows: Given a Markov system $S=(V,P,C,s,t)$ and state $u$, let $S(u)=(V,P,C,u,t)$ denote the Markov system whose starting state is $u$. Consider the Markov game $S_g(u)$ -- where at any step of the game one is allowed to either play in $S(u)$ or quit. Quitting incurs a cost of $g$. Playing in $S(u)$ is equivalent to taking a step following the Markov system $S$ incurring the cost associated with the state of the system. The game stops once the target state is reached or once we quit. The grade $\grade(u)$ of state $u$ is defined to be the smallest real value $g$ such that there exists an optimal strategy $\sigma$ that plays in $S(u)$ in the first step. We note that, by definition, the cost of the game $S_{\grade(u)}(u)$ is $\E[S_{\grade(u)}(u)]=\grade(u)=\E_{\sigma}[S_{\grade(u)}(u)]$.

\begin{theorem}\label{thm:gittins-strategy}\cite{online-optimal-strategy-DTW03}
Given the states $u_1,\ldots,u_n$ of the Markov systems in the multitoken Markov game, the unique optimal strategy is to pick the token $i$ such that $\grade(u_i)$ is minimal.
\end{theorem}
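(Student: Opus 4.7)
\textbf{Proof proposal for Theorem~\ref{thm:gittins-strategy}.}

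The plan is to prove the theorem by a standard Gittins-index style interchange argument, combined with the structural result of Theorem~\ref{thm:pure-optimal-strategy} that reduces the search to pure strategies. The key idea is that the grade $\gamma(u)$ acts as an \emph{index} summarizing everything one needs to know about a token's current state in order to compare it against other tokens: the token with the smallest grade is the one whose immediate continuation offers the best per-cost ``deal'', and this property is preserved when the token is embedded in a larger game against competing tokens.

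First I would establish two single-token facts about the grade. (i) $\gamma(u)$ is well-defined: the value of the single-token quit-or-play game $S_g(u)$ is continuous and nondecreasing in $g$, bounded above by $g$, so there is a smallest $g$ at which playing is still optimal, namely $\gamma(u)$. (ii) An optimal strategy $\sigma$ in $S_{\gamma(u)}(u)$ can be chosen so that it plays exactly until the token enters some state $v$ with $\gamma(v) > \gamma(u)$, at which point it quits; this is the ``prevailing charge'' lemma, and follows by dynamic programming together with the observation that once the grade strictly exceeds $\gamma(u)$, continuing to play costs strictly more in expectation than the quit cost $\gamma(u)$.

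Next I would carry out the interchange step. Fix a pure optimal strategy $\pi$ for $G$, which exists by Theorem~\ref{thm:pure-optimal-strategy}, and suppose for contradiction that at some reachable configuration $(u_1,\dots,u_n)$, $\pi$ plays a token $j$ with $\gamma(u_j) > \min_i \gamma(u_i) = \gamma(u_{i^*})$. Construct $\pi'$ by first playing token $i^*$ until it either reaches its target (in which case the game ends, strictly better than under $\pi$) or reaches a state $v$ with $\gamma(v) > \gamma(u_{i^*})$, and only then following $\pi$ from the appropriately updated configuration. Using the single-token lemma, the expected cost incurred during this ``prefix'' is at most $\gamma(u_{i^*}) \leq \gamma(u_j)$ per ``effective step'', which bounds the detour cost by what $\pi$ would have charged had it been forced to commit to $u_j$; aggregating over all configurations where $\pi$ violates the rule and taking a limit yields $\E_{\pi'}[G] \leq \E_\pi[G]$. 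Uniqueness follows because every deviation with strict grade inequality yields a strict inequality in this comparison.

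The main obstacle I expect is the global nature of termination in the multitoken game: one token reaching its target ends the whole game, so playing token $i^*$ first can ``steal'' termination events from $\pi$, and one must verify that the interchange accounting respects this coupling. I would handle this by coupling the randomness of the two strategies on a token-by-token basis so that the sequence of states visited by each token is identical under $\pi$ and $\pi'$; only the \emph{order} of exits differs. Under this coupling, the first token to reach its target under $\pi'$ reaches it no later (in calendar steps) than under $\pi$, and the cost comparison reduces to comparing a reordered sum of the same per-token cost sequences, which is exactly the setting where the single-token prevailing-charge bound applies. A secondary subtlety, the infinite-horizon nature of the game, is handled by truncating at $N$ steps, applying the interchange, and letting $N \to \infty$ using monotone convergence of expected cost.
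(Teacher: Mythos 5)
First, a point of context: the paper does not prove Theorem~\ref{thm:gittins-strategy} at all --- it is imported verbatim from \cite{online-optimal-strategy-DTW03} (together with Theorem~\ref{thm:pure-optimal-strategy}) and used as a black box in the proof of Theorem~\ref{thm:algorithm-optimality}. So there is no in-paper proof to compare against; your proposal has to be judged against the known argument for this result. Your overall architecture (well-definedness of the grade, the ``play until the grade strictly increases'' structure of an optimal single-token strategy, then an interchange argument over pure strategies) is indeed the right skeleton, and matches the Gittins-index-style proof in \cite{online-optimal-strategy-DTW03}.

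However, your handling of the one obstacle you correctly identify --- that a token reaching its target ends the whole game --- contains a genuine error. You claim that under the token-by-token coupling, ``the first token to reach its target under $\pi'$ reaches it no later (in calendar steps) than under $\pi$.'' This is false pathwise: if $\pi$ plays token $j$ immediately and $j$ happens to hit its target in one step, while $\pi'$ first spends many steps on token $i^*$ without reaching its target, then $\pi'$ terminates strictly later on that sample path. The advantage of the lowest-grade-first rule holds only in expectation, not path by path, so the comparison cannot be reduced to ``a reordered sum of the same per-token cost sequences.'' The actual proof replaces this with a global accounting device --- the prevailing charge: each play of a token in state $u$ is charged $\max$ of the grades seen so far along that token's trajectory (equivalently, the quit-cost of the single-token game the token is currently ``locked into''), one shows that the expected total cost of any strategy is at least the expected total prevailing charge it collects, with equality exactly for the index rule, and that the index rule minimizes the expected charge collected before some token exits. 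That expectation-level argument is the real content of the theorem, and it is the piece your sketch asserts rather than supplies. Your uniqueness claim (``every deviation with strict grade inequality yields a strict inequality'') likewise needs this machinery to be made precise. As written, the proposal is a plausible outline of the right strategy with the central step missing.
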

We observe that the above results can be extended in a straightforward manner to the case where (1) the number of Markov systems is countably infinite, i.e., $n=\infty$ and (2) the Markov systems have infinite state space but all states are locally finite (i.e., the number of possible transitions from any fixed state is finite), by working through the proofs in \cite{online-optimal-strategy-DTW03}. The Markov systems that will be considered for our purpose will satisfy these two properties.

We use the following results from \cite{gamblers-ruin-probability-KE02} to bound the number of tosses. 
\begin{theorem}\label{thm:gamblers-ruin}\cite{gamblers-ruin-probability-KE02}
Let $X\in [-\nu,\mu]$ be the random variable that determines the step-sizes of a one dimensional random walk with absorbing barriers at $-L$ and $W$ such that $\prob{X>0}>0$, $\prob{X<0}>0$, $\expec{X}\neq 0$. Let $L^*=L+\nu$, $W^*=W+\mu$ and $\phi(\rho):=\expec{\rho^{X}}$. 
\begin{enumerate}
\item The function $\phi(\rho)$ is convex. If $\expec{X}\ne 0$, there exists a unique $\rho_0\in (0,1)\cup(1,\infty)$ such that $\phi(\rho_0)=1$. If $\expec{X}<0$, then $\rho_0>1$ and if $\expec{X}>0$, then $\rho_0<1$.
\item 
\[
\prob{\text{Absorption at $W$}}\ge \frac{1-\rho_0^{L}}{1-\rho_0^{L+W^*}}.
\]
\item If $\expec{X}<0$, then 
\[
\expec{\text{Number of steps to absorption}}\le \frac{L^*}{\abs{\expec{X}}}.
\]
\item If $\expec{X}>0$, then 
\[
\expec{\text{Number of steps to absorption}}\le \frac{\left(L+W^*\right)}{\expec{X}}\left(\frac{1-\rho_0^{L^*}}{1-\rho_0^{L^*+W}}\right).
\]
\end{enumerate}
\end{theorem}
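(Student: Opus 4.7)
The plan is to treat the four parts in order using three standard tools: convexity of the moment generating function for part 1, the exponential martingale $\rho_0^{S_t}$ combined with Doob's optional stopping theorem for part 2, and Wald's identity for parts 3 and 4.

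For part 1, I would pass to $\psi(t):=\phi(e^t)=\expec{e^{tX}}$, the moment generating function of $X$, which is convex in $t$ with $\psi(0)=1$ and $\psi'(0)=\expec{X}$. When $\expec{X}<0$, $\psi$ decreases at $0$ while $\psi(t)\to\infty$ as $t\to\infty$ (which holds as soon as $X$ takes positive values, as it must if $W$ is reachable), so strict convexity gives a unique $t_0>0$ with $\psi(t_0)=1$, and we set $\rho_0=e^{t_0}>1$. A symmetric argument handles $\expec{X}>0$ and yields $\rho_0<1$. Convexity of $\phi$ in $\rho$ on the relevant range follows from the change of variables.

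For part 2, the choice $\phi(\rho_0)=1$ makes $M_t:=\rho_0^{S_t}$ (with $S_t$ the walker's position, $S_0=0$) a martingale, since $\expec{\rho_0^{S_{t+1}}\mid S_t}=\rho_0^{S_t}\phi(\rho_0)=\rho_0^{S_t}$. Bounded steps and nonzero drift force $\expec{\tau}<\infty$ for $\tau$ the first exit time of $(-L,W)$, so optional stopping yields $\expec{\rho_0^{S_\tau}}=1$. Since the walk cannot overshoot by more than $\mu$ above or $\nu$ below, $S_\tau\in[W,W^*]$ on the top and $S_\tau\in[-L^*,-L]$ on the bottom. Setting $p:=\prob{\text{absorbed at }W}$ and expanding $\expec{\rho_0^{S_\tau}}=1$ with the range bounds chosen so that the right-hand side is an affine over-estimate of the true expectation in $p$, then solving for $p$ and rearranging, yields $p\ge (1-\rho_0^{L})/(1-\rho_0^{L+W^*})$.

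For parts 3 and 4, I would invoke Wald's identity $\expec{S_\tau}=\expec{X}\expec{\tau}$. In part 3, $\expec{X}<0$ gives $\expec{\tau}=\expec{-S_\tau}/|\expec{X}|$; since $-S_\tau\le L^*$ almost surely (the walk is positive on the top side and overshoots $-L$ by at most $\nu$), the bound $L^*/|\expec{X}|$ follows immediately. In part 4, $\expec{\tau}=\expec{S_\tau}/\expec{X}$ and $\expec{S_\tau}\le pW^*+(1-p)(-L)\le p(L+W^*)$; an upper bound on $p$ is obtained by repeating the part 2 argument with the range bounds chosen in the opposite sense, which gives $p\le (1-\rho_0^{L^*})/(1-\rho_0^{L^*+W})$, and combining produces the stated estimate. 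The main obstacle throughout is the careful tracking of overshoot at the two absorbing barriers, which is precisely why the bounds involve both the barrier positions $L,W$ and the shifted quantities $L^*,W^*$; for a unit-step lattice walk the overshoots vanish and the formulas collapse to the textbook gambler's ruin expressions, while in the general case each one-sided inequality must be paired with the worst-case overshoot consistent with the direction of the desired bound. Justifying optional stopping and Wald's identity is routine once $\expec{\tau}<\infty$ is established, which itself follows from bounded steps and nonzero drift.
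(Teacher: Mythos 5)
The paper offers no proof of this theorem: it is imported verbatim from the cited reference \cite{gamblers-ruin-probability-KE02} and used as a black box in Lemma \ref{lem:absorption-bounds}, so there is no in-paper argument to compare against. Judged on its own, your sketch is the standard and essentially correct route: the exponential martingale $\rho_0^{S_t}$ with optional stopping (the stopped process is bounded since $S_{t\wedge\tau}\in[-L^*,W^*]$, so no integrability issues arise) does yield both the lower bound $p\ge(1-\rho_0^{L})/(1-\rho_0^{L+W^*})$ and the companion upper bound $p\le(1-\rho_0^{L^*})/(1-\rho_0^{L^*+W})$ once the overshoot ranges are paired with the correct one-sided estimates in each case $\rho_0\gtrless 1$, and Wald's identity with $-S_\tau\le L^*$ (resp.\ $\expec{S_\tau}\le p(L+W^*)$) gives parts 3 and 4 exactly as you describe. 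The one soft spot is part 1: convexity is \emph{not} preserved under the substitution $\rho=e^{t}$, and indeed for a two-point distribution supported on $\{\mu,-\nu\}$ with $\mu<1$ the map $\rho\mapsto\expec{\rho^X}$ can fail to be convex in $\rho$ (its second derivative $\mu(\mu-1)\prob{X=\mu}\rho^{\mu-2}+\nu(\nu+1)\prob{X=-\nu}\rho^{-\nu-2}$ goes negative for large $\rho$); the correct and sufficient statement is that $\psi(t)=\expec{e^{tX}}$ is strictly convex in $t$ with $\psi(0)=1$ and $\psi'(0)=\expec{X}$, which is all you actually use to locate the unique $\rho_0=e^{t_0}$ on the correct side of $1$. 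So drop the sentence claiming convexity of $\phi$ in $\rho$ ``follows from the change of variables'' and phrase part 1 entirely in terms of $\psi$; everything else stands.
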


\section{Correctness}
We first argue the correctness of the algorithm.

\begin{lemma}\label{lem:heavy-coin-large-likelihood}
Given the history $D_i$ for a coin $i$, 
\[
\prob{\text{Coin $i$ is heavy}|D_i}\ge 1-\delta \text{ if and only if } L_i\ge \left(\frac{1-\delta}{\delta}\right)\left(\frac{1-\alpha}{\alpha}\right).
\]
\end{lemma}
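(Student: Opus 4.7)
The plan is to reduce the claim to a routine application of Bayes' rule. Let $H_i$ denote the event that coin $i$ is heavy and $N_i$ the complementary event. Since the prior is $\prob{H_i}=\alpha$ and $\prob{N_i}=1-\alpha$, and conditional on the coin type the tosses are i.i.d.\ Bernoulli$(p+\eps)$ or Bernoulli$(p-\eps)$ respectively, with $D_i=(h_i,t_i)$ we immediately have $\prob{D_i\mid H_i}=(p+\eps)^{h_i}(q-\eps)^{t_i}$ and $\prob{D_i\mid N_i}=(p-\eps)^{h_i}(q+\eps)^{t_i}$. Their ratio is exactly the quantity $L_i=((p+\eps)/(p-\eps))^{h_i}((q-\eps)/(q+\eps))^{t_i}$ that the algorithm maintains.

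Next I would translate the posterior-probability condition to a posterior-odds condition: $\prob{H_i\mid D_i}\ge 1-\delta$ holds iff $\prob{H_i\mid D_i}/\prob{N_i\mid D_i}\ge (1-\delta)/\delta$. By Bayes' rule, the posterior odds factor as prior odds times data-likelihood ratio:
\[
\frac{\prob{H_i\mid D_i}}{\prob{N_i\mid D_i}} = \frac{\alpha}{1-\alpha}\cdot\frac{\prob{D_i\mid H_i}}{\prob{D_i\mid N_i}} = \frac{\alpha}{1-\alpha}\cdot L_i.
\]
Substituting into the odds inequality and isolating $L_i$ gives the equivalent condition $L_i\ge ((1-\delta)/\delta)((1-\alpha)/\alpha)$, which is precisely the claim. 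The computation is a two-line rearrangement once the likelihoods are written down, so there is no genuine obstacle; the only care needed is to observe that each step in the chain of equivalences is reversible, yielding the ``if and only if'' in the lemma.
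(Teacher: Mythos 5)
Your proof is correct and is essentially the same argument as the paper's: both are direct applications of Bayes' theorem, with the paper writing the posterior as $\alpha L_i/(\alpha L_i+(1-\alpha))$ and rearranging, while you equivalently work with posterior odds equal to prior odds times the likelihood ratio. The chain of equivalences is reversible in both versions, so the ``if and only if'' goes through identically.
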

\begin{proof} 
The lemma is a straightforward application of Bayes' theorem.
\begin{align*}
\prob{\text{Coin $i$ is heavy}|D_i}&= \frac{\prob{D_i|\text{Coin $i$ is heavy}}\prob{\text{Coin $i$ is heavy}}}{\prob{D_i}}\\
&=\frac{\alpha(p+\eps)^{h_i}(q-\eps)^{t_i}}{\alpha(p+\eps)^{h_i}(q-\eps)^{t_i}+(1-\alpha)(p-\eps)^{h_i}(q+\eps)^{t_i}}\\
&=\frac{\alpha L_i}{\alpha L_i+(1-\alpha)}.
\end{align*}
Thus, it follows that
\[
\prob{\text{Coin $i$ is heavy}|D_i}\ge 1-\delta \text{ if and only if } L_i\ge \left(\frac{1-\delta}{\delta}\right)\left(\frac{1-\alpha}{\alpha}\right).
\]
\end{proof}
The algorithm computes the likelihood ratio $L_i$ for each coin $i$ based on the history of outcomes of the coin. The algorithm repeatedly tosses coins until there exists $i^*$ such that $L_{i^*}\ge(1-\alpha)(1-\delta)/\alpha\delta$. Thus, if $i^*$ is output by Algorithm Likelihood-Toss, then 
\[
\prob{\text{Coin $i^*$ is heavy}|D_{i^*}}\ge 1-\delta.
\]

\section{Optimality of the Algorithm}
Consider the log-likelihood of a coin $i$ defined as $X_i:=\log{L_i}$. Given the history of a coin, the log-likelihood of the coin is determined uniquely. In the beginning, the history is empty and hence all log-likelihoods are identically zero. The influence of a toss on the log-likelihood is a random step for $X_i$ -- if the outcome of the toss is a head, then $X_i\leftarrow X_i+\Delta_H$ and if the outcome is a tail, then $X_i\leftarrow X_i-\Delta_T$. Thus, the toss outcomes of the coin leads to a 1-dimensional random-walk of the log-likelihood function associated with the coin. Further, since we stop tossing as soon as the log-likelihood of a coin is greater than $B=\log{(1-\alpha)(1-\delta)/\alpha\delta}$, the random-walk has an absorbing barrier at $B$. We observe that the random walks performed by the coins are independent of each other since each coin being heavy is independent of the rest of the coins. 

Thus, we have infinitely many identical Markov systems $S_1,S_2,\ldots, $ with each one starting in state $X_i=0$.
Each Markov system also has a target state, namely the boundary $B$. A strategy to pick a coin to toss is equivalent to picking a Markov system $i$. Each toss outcome is equivalent to the corresponding system taking a step following the transition probability and step size of the system. The goal to minimize the expected number of future tosses is equivalent to minimizing the expected number of steps for one of the Markov systems to reach the target state.

Therefore, we are essentially seeking an optimal strategy to play a multitoken Markov game. We show that the strategy employed by Algorithm Likelihood-Toss is an optimal strategy to play the multitoken Markov game that arises in our setting.

Let the Markov system associated with the one-dimensional random walk of the log-likelihood function of the history of the coin be $S=(V,P,C,s,t)$. Here, the state space $V$ consists of every possible real value that is at most $B$. The target state is a special state determined by $t=B$. The starting state is $s=0$. Given the current state $X$, the transition cost incurred is one while transition probabilities are defined as follows:
\begin{equation*}
X\rightarrow
\begin{cases}
& \min\{X+\Delta_H,B\}\text{ with probability $\prob{\text{Heads}|X}$},\\
& X-\Delta_T\text{ with probability $1-\prob{\text{Heads}|X}$}
\end{cases}
\end{equation*}
where 
\begin{align*}
\prob{\text{Heads}|X}&=\prob{\text{Heads}|\text{Heavy coin}}\prob{\text{Heavy coin}|X}\\
&\quad \quad +\prob{\text{Heads}|\text{Non-heavy coin}}\prob{\text{Non-heavy coin}|X}\\
&=\frac{(p+\eps)\alpha e^{X}}{\alpha e^{X}+(1-\alpha)}+\frac{(p-\eps)(1-\alpha)}{\alpha e^{X}+(1-\alpha)}.
\end{align*}
We observe that the transition probabilities in this random-walk vary with the state of the system (as opposed to the well-known random-walk under uniform transition probability). It is clear that this Markov system is locally finite -- the number of possible states reachable using one transition from any fixed state is only two. In this modeling of the Markov System for the log-likelihood of each coin, we do not condition on the coin being heavy or not-heavy. We are postponing this decision by conditioning based on the history.

\subsection{Proof of Optimality}
We now show that the grade is a monotonically non-increasing function of the log-likelihood.
\begin{lemma}\label{lem:grade-likelihood-monotonicity}
Consider the Markov System $S=(V,P,C,s,t)$ associated with the log-likelihood function. Let $X,Y\in V$ such that $X\ge Y$. Then $\grade(X)\le \grade(Y)$.
\end{lemma}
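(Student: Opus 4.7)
The plan is to establish non-increasingness of $\grade$ by first proving the corresponding monotonicity for the value function of the quitting game $S_g$ and then reading off $\grade$ from it. Throughout, I exploit two structural monotonicities. First, a short algebraic check shows that $\prob{\text{Heads}|X}$ is non-decreasing in $X$: using the formula from the excerpt, it is a fractional linear function of $e^X$ increasing from $p-\eps$ to $p+\eps$. Second, for any $X\ge Y$ there is a natural coupling of the two log-likelihood walks: draw a shared $U\sim\mathrm{Unif}[0,1]$, move both walks up if $U\le\prob{\text{Heads}|Y}$, only the $X$-walk up if $\prob{\text{Heads}|Y}<U\le\prob{\text{Heads}|X}$, and both walks down otherwise. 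Since $\Delta_H,\Delta_T>0$, the inequality $X_t\ge Y_t$ is preserved at every step.

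Let $V_g(u)$ denote the value of $S_g(u)$, and define $W_g(u):=1+\E_{u'\sim u}[V_g(u')]$ to be the cost of playing the first step; the Bellman equation reads $V_g(u)=\min(g,W_g(u))$ for $u<B$, with $V_g(B)=0$. I would prove by induction on value-iteration steps that $V_g^{(k)}$ is non-increasing in $u$ for every $g$. The base case $V_g^{(0)}(u)=g$ for $u<B$ and $V_g^{(0)}(B)=0$ is clear. For the inductive step with $X\ge Y$, set $N_Z:=V_g^{(k)}(\min\{Z+\Delta_H,B\})$ and $M_Z:=V_g^{(k)}(Z-\Delta_T)$. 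The induction hypothesis gives $N_X\le N_Y$ and $M_X\le M_Y$; and because $\min\{X+\Delta_H,B\}\ge X-\Delta_T$ (as $\Delta_H+\Delta_T>0$ and $X<B$), it also gives $N_X\le M_X$. Writing $p_Z:=\prob{\text{Heads}|Z}$ and using $p_X\ge p_Y$ together with $N_X\le M_X$,
\begin{align*}
\E{V_g^{(k)}(X')} &= M_X+p_X(N_X-M_X) \le M_X+p_Y(N_X-M_X) \\
&= p_Y N_X+(1-p_Y)M_X \le p_Y N_Y+(1-p_Y)M_Y = \E{V_g^{(k)}(Y')}.
\end{align*}
Taking $\min(g,1+\cdot)$ on both sides yields $V_g^{(k+1)}(X)\le V_g^{(k+1)}(Y)$. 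Passing to the fixed point, $V_g$ and hence $W_g$ are non-increasing in $u$ for every $g$.

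To conclude, the definition recalled in the excerpt makes $\grade(u)$ the smallest $g$ for which playing the first step is optimal in $S_g(u)$, equivalently the smallest $g$ with $g\ge W_g(u)$:
\[
\grade(u)=\inf\{g:g\ge W_g(u)\}.
\]
Since $W_g(X)\le W_g(Y)$ for every $g$, any $g$ satisfying $g\ge W_g(Y)$ also satisfies $g\ge W_g(X)$; hence $\grade(X)\le\grade(Y)$, as claimed.

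The step I expect to be hardest is the combined monotonicity in the inductive argument: one must use both state-monotonicity of $V_g^{(k)}$ and probability-monotonicity of $\prob{\text{Heads}|\cdot}$ in tandem, and the crucial algebraic ingredient is $N_X\le M_X$ (a ``heads'' outcome lands at a smaller $V_g$ value than ``tails'' from the same state), which is what orients the probability-shift inequality in the right direction.
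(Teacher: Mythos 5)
Your proof is correct, but it takes a genuinely different route from the paper. The paper argues directly at the level of strategies: it takes the optimal pure strategy $\sigma$ for $S_g(Y)$ with $g=\grade(Y)$, encodes it as a binary decision tree, and builds from it a ternary decision tree realizing a \emph{mixed} strategy $\pi$ for $S_g(X)$ that couples the two walks so that the $X$-labels always dominate the $Y$-labels; a node-by-node cost comparison then gives $\E_{\pi}[S_g(X)]\le g$, hence $\grade(X)\le g$. You instead work at the level of the Bellman operator: you show by induction on value iteration that the value function $V_g(\cdot)$ of the quitting game is non-increasing in the state, and then read off the grade as $\inf\{g: g\ge W_g(u)\}$. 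Both arguments pivot on the same two structural facts --- that $\prob{\text{Heads}|X}$ is non-decreasing in $X$ (which you verify explicitly, whereas the paper only asserts it) and that the state ordering must be combined correctly with the probability shift; your inequality $N_X\le M_X$ plays exactly the role that the $u_{HT}$ branch of the ternary tree plays in the paper. Your route is more compact and avoids constructing randomized strategies, but it leans on two standard facts you state without proof: that value iteration started from the all-quit function converges to the true value of the infinite-horizon quitting game (true here because the quit option bounds the value by $g$ and any near-optimal strategy has finite expected duration, so truncation costs vanish), and that ``playing first is optimal'' is equivalent to $W_g(u)\le g$ with the infimum attained (which needs continuity of $g\mapsto W_g(u)$). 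The paper's coupling construction, by contrast, produces an explicit witness strategy and stays entirely within the decision-tree formalism of the cited Markov-game framework. Note also that the coupling you sketch in your opening paragraph is essentially the paper's ternary tree, but your actual proof never uses it, so that paragraph is vestigial.
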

\begin{proof} 
Let $\grade(Y)=g$. Then, by definition of grade, it follows that there exists a pure optimal strategy $\sigma$ that chooses to toss the coin in the first step in $S_{g}(Y)$ and $E_{\sigma}[S_{g}(Y)]=g$. We will specify a mixed strategy $\pi$ for $S_{g}(X)$ such that $\E_{\pi}[S_{g}(X)]\le g$ and $\pi$ chooses to play in the system $S(X)$ in the first step. It follows by definition that $\grade(X)\le g$.

The pure strategy $\sigma$ can be expressed by a (possibly infinite) binary decision tree $D_{\sigma}$ as follows: Each node $u$ has an associated label $l(u)\in \reals$. Each edge has a label from $\{H,T\}$. The root node $v$ is labeled $l(v)=Y$. On reaching $l(u)<B$, if $\sigma$ chooses to play in the system, then $u$ has two children - the left and right children $u_L, u_R$ are labeled $l(u_L)=l(u)+\Delta_H$ and $l(u_R)=l(u)-\Delta_T$ respectively. The edges $(u,u_L), (u,u_R)$ are labeled $H$ and $T$ respectively. On reaching $l(u)<B$, if $\sigma$ decides to quit, then $u$ is a leaf node. Finally, if $l(u)\ge B$, then $u$ is a leaf node. We observe that since $\sigma$ plays in the system $S_g(Y)$ in the first step, the root of $D_{\sigma}$ is not a leaf. (See Figure \ref{figure:trees} for an example.)

\begin{figure}[ht]
\centering
\begin{tabular}{cc}
\includegraphics[scale=0.5]{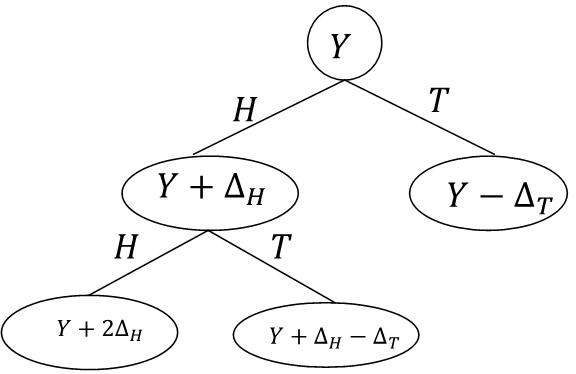} &
\includegraphics[scale=0.5]{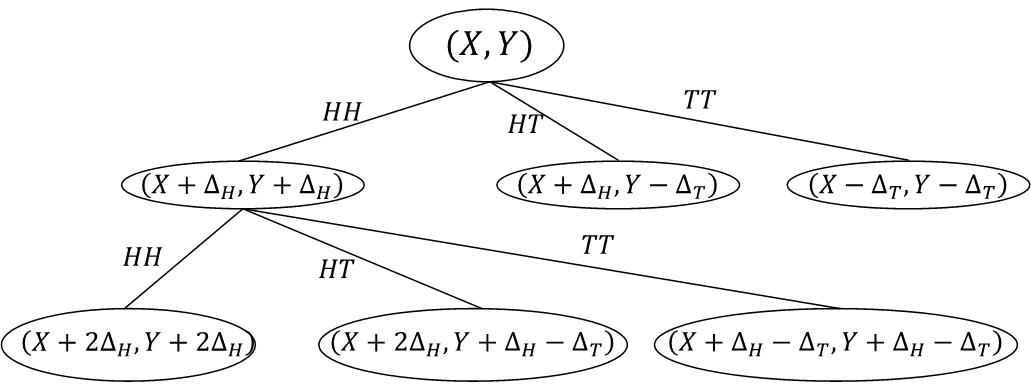}\\
Binary Decision Tree $D_{\sigma}$ & Ternary Decision Tree $D_{\pi}$
\end{tabular}
\caption{An example of a strategy $\sigma$ represented as a binary decision tree $D_{\sigma}$ for the Markov game $S_g(Y)$ where $B-2\Delta_H<Y<B-\Delta_H$; the strategy $\sigma$ is to continue playing in the system $S_g(Y)$ on reaching states $Y$ and $Y+\Delta_H$ and to quit on reaching states $Y-\Delta_T$ and $Y+\Delta_H-\Delta_T$. The corresponding ternary decision tree $D_{\pi}$ derived from $D_{\sigma}$ is also shown.}
\label{figure:trees}
\end{figure}

We obtain a mixed strategy $\pi$ for $S_g(X)$ by considering the following ternary tree $D_{\pi}$ derived from $D_{\sigma}$: Each node $u$ in $D_{\pi}$ has an associated label $(l_X(u),l_Y(u))\in \reals^2$. Each edge in $D_{\pi}$ has a label from $\{HH,HT,TT\}$. There is an onto mapping $m(u)$ from each node $u\in D_{\pi}$ to a node in $D_{\sigma}$. The root node $u$ is labeled $(l_X(u)=X,l_Y(u)=Y)$ and $m(u)=$Root$(D_{\sigma})$. For any node $u$, if $m(u)$ is a leaf, then $u$ is a leaf. Let $u$ be a node such that $v=m(u)$ is not a leaf. Let $v_H$ and $v_T$ denote the left and right children of $v$. Create children $u_{HH}, u_{HT}, u_{TT}$ as nodes adjacent to edges labeled $HH, HT, TT$ respectively. Define the mapping $m(u_{HH})=v_H$, $m(u_{HT})=v_T$, $m(u_{TT})=v_T$ and set 
\begin{align*}
l_X(u_{HH})=l_X(u)+\Delta_H,\ l_X(u_{HT})&=l_X(u)+\Delta_H,\ l_X(u_{TT})=l_X(u)-\Delta_T,\\
l_Y(u_{HH})=l_Y(u)+\Delta_H,\ l_Y(u_{HT})&=l_Y(u)-\Delta_T,\ l_Y(u_{TT})=l_Y(u)-\Delta_T.
\end{align*}
By construction of $D_{\pi}$, it follows that if $X\ge Y$, then at any node $u$ in $D_{\pi}$, $l_X(u)\ge l_Y(u)$ and hence, $\prob{\text{Heads}|l_X(u)}\ge \prob{\text{Heads}|l_Y(u)}$.

Our mixed strategy $\pi$ for $S_g(X)$ is based on $D_{\pi}$. The strategy at any step maintains a pointer to some node $u$ in $D_{\pi}$. Initialize the pointer to the root node $u$. If the pointer is at a non-leaf node $u$, then $\pi$ chooses to play in the system. If the step in the system is a backward step (outcome of coin toss is a tail), then $\pi$ moves the pointer to $u_{TT}$. If the step in the system is a forward step (outcome of coin toss is a head), then $\pi$ generates a random number $r\in [0,1]$ and moves the pointer to the node $u_{HH}$ if $r<\prob{\text{Heads}|l_Y(u)}/\prob{\text{Heads}|l_X(u)}$ and to the node $u_{HT}$ if $r\ge\prob{\text{Heads}|l_Y(u)}/\prob{\text{Heads}|l_X(u)}$. If the pointer is at a leaf node $u$ such that $l_Y(u)<B$, then $\pi$ quits the system. Otherwise, $l_Y(u)\ge B$ and hence $l_X(u)\ge B$. Thus, the strategy $\pi$ is a valid mixed strategy for $S_g(X)$ and $\pi$ plays in the system $S_g(X)$ in the first step since $\sigma$ plays in the system $S_g(Y)$ in the first step.

It only remains to show that $\E_{\pi}[S_g(X)]\le g$. This is shown in Claim \ref{claim:expected-cost-of-pi}.
\end{proof}

\begin{claim}\label{claim:expected-cost-of-pi}
\[
\E_{\pi}[S_g(X)]\le g.
\]
\end{claim}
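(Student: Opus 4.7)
The plan is to couple $\pi$'s execution on $S_g(X)$ with $\sigma$'s execution on $S_g(Y)$ through the ternary tree $D_{\pi}$, and deduce $\E_{\pi}[S_g(X)] \le \E_{\sigma}[S_g(Y)] = g$.

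First, I would verify that the $D_{\pi}$-edge probabilities induce the correct marginal random walks on both coordinates simultaneously. A direct calculation shows that at a node $u$, the $\pi$-probabilities of the edges $HH$, $HT$, $TT$ are $\prob{\text{Heads}|l_Y(u)}$, $\prob{\text{Heads}|l_X(u)} - \prob{\text{Heads}|l_Y(u)}$, and $1-\prob{\text{Heads}|l_X(u)}$ respectively. These are all non-negative, because $\prob{\text{Heads}|\cdot}$ is monotone non-decreasing (visible from its closed form) and the invariant $l_X(u)\ge l_Y(u)$ holds at every node of $D_{\pi}$ (easy induction using the labeling rules for $HH$, $HT$, $TT$). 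Marginalizing, $l_X$ moves up by $\Delta_H$ with probability $\prob{\text{Heads}|l_X(u)}$ and $l_Y$ moves up by $\Delta_H$ with probability $\prob{\text{Heads}|l_Y(u)}$, so each coordinate correctly simulates its own Markov transitions.

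Next, I would transport the cost analysis to $D_{\sigma}$ via the mapping $m$. Since $m$ sends the $HH$-edge to the $H$-child and both the $HT$- and $TT$-edges to the $T$-child of the image node, the distribution of the random image-trajectory $m(u_0), m(u_1), \ldots$ under $\pi$ matches $\sigma$'s trajectory on $S_g(Y)$ step for step. Moreover, depth is preserved under $m$, and $\pi$ reaches a leaf $u\in D_{\pi}$ at which $\sigma$ would have quit if and only if the $\sigma$-trajectory reaches a quit-leaf of $D_{\sigma}$. Hence if we let $\pi$ run its literal course through $D_{\pi}$ and charge $1$ for each play plus $g$ on a quit-leaf, the expected total ``full-tree'' cost equals $\E_{\sigma}[S_g(Y)] = g$.

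Finally, I would observe that $\pi$'s actual cost on $S_g(X)$ is dominated by this literal full-tree cost. Whenever $l_Y$ reaches $B$, so does $l_X$ by the invariant $l_X \ge l_Y$, so target-reaching in $D_{\pi}$ is consistent with target-reaching in $S_g(X)$. If $l_X$ crosses $B$ strictly before $l_Y$ does, then $S_g(X)$ terminates immediately: we omit any further play-steps that the literal full-tree $\pi$ would have taken, and we avoid the quit-cost $g$ if the eventual $D_{\pi}$-leaf turned out to be a quit-leaf. Both savings are non-negative, so $\E_{\pi}[S_g(X)] \le g$. The main obstacle is precisely this bookkeeping mismatch between $D_{\pi}$'s termination rule (driven by $l_Y$) and $S_g(X)$'s termination rule (driven by $l_X$); resolving it rests on the invariant $l_X\ge l_Y$ together with the depth-preservation and quit-event correspondence supplied by $m$.
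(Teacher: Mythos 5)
Your proposal is correct and follows essentially the same route as the paper: run the coupled random process on the ternary tree $D_{\pi}$, use the projection $m$ to identify the $Y$-coordinate cost with $\E_{\sigma}[S_g(Y)]=g$, and then dominate the actual cost of $\pi$ on $S_g(X)$ pathwise by that $Y$-cost via the invariant $l_X(u)\ge l_Y(u)$ (the paper phrases this as $c_X(u)\le c_Y(u)$ for every node, you phrase it as non-negative savings from earlier termination and avoided quit-costs). No substantive difference.
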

\begin{proof}
The cost of using $\sigma$ for $S_{g}(Y)$ can be simulated by running a random process in $D_{\sigma}$ and considering an associated cost. For each non-leaf node in $D_{\sigma}$ associate a cost of $1$ and for each leaf node $u$ in $D_{\sigma}$ such that $l(u)<B$, associate a cost of $g$. Consider the following random process $RP_1(u)$ for a node $u\in D_{\sigma}$: Begin at node $u$ of $D_{\sigma}$. On reaching a non-leaf node $v$, repeatedly traverse the tree $D_{\sigma}$ by taking the left child with probability $\prob{\text{Heads}|l(v)}$ and the right child with the remaining probability until a leaf node is reached. The cost of the random process is the sum of the cost incurred along the nodes in the path traversed by the random process. Let $\E[D_{\sigma}(u)]$ denote the expected cost. Then, by construction of $D_{\sigma}$, it follows that $\E[D_{\sigma}(r)]=\E_{\sigma}[S_{g}(l(r))]=g$ for the root node $r$ in $D_{\sigma}$.

Next, we give a random process $RP_2$ on $D_{\pi}$ that relates the expected cost of following strategy $\pi$ on $S_g(X)$ and the expected cost of following strategy $\sigma$ on $S_g(Y)$. We first associate a cost with each node $u$ in $D_{\pi}$: For each non-leaf node $u$, if $l_X(u)<B$, then cost $c_X(u)=1$, and if $l_Y(u)<B$, then cost $c_Y(u)=1$. For each leaf node $u$, if $l_X(u)<B$, then cost $c_X(u)=g$ and if $l_Y(u)<B$, then cost $c_Y(u)=g$. The remaining costs are zero. Here, we observe that $c_X(u)\le c_Y(u)$ for every node $u\in D_{\pi}$.

We define the random process $RP_2(v)$ for a node $v\in D_{\pi}$ as follows: Begin at node $v$ and repeatedly traverse the tree $D_{\pi}$ by taking one of the three children at each non-leaf node until a leaf node is reached. On reaching a non-leaf node $u$, traverse to $u_{HH}$ with probability $\prob{\text{Heads}|l_Y(u)}$, to $v_{HT}$ with probability  $\prob{\text{Heads}|l_X(u)}-\prob{\text{Heads}|l_Y(u)}$ and to $u_{TT}$ with the remaining probability. Let $P(v)$ be the set of nodes in the path traversed by the random process $RP_2(v)$. Let the cost incurred be $\overline{c}_X(v)=\sum_{u\in P(v)}c_X(u)$ and $\overline{c}_Y(v)=\sum_{u\in P(v)}c_Y(u)$. Now, the cost incurred by following strategy $\pi$ for $S_g(X)$ is the same as the cost $\overline{c}_X(r)$ incurred by the random process $RP_2(r)$, where $r$ is the root node in $D_{\pi}$. 

By construction of $D_{\pi}$ from $D_{\sigma}$, it follows that for each node $v\in D_{\pi}$, the expected cost $\overline{c}_Y(v)$ of the random process $RP_2(v)$ is equal to the expected cost of the random process $RP_1(m(v))$.
Hence, $\E[\overline{c}_Y(r)]=\E[D_{\sigma}(m(r))]=g$ for the root node $r$ in $D_{\pi}$. 
Next, since $c_X(u)\le c_Y(u)$ for every node $u$, it follows that $\E[\overline{c}_X(r)]\le \E[\overline{c}_Y(r)]=g$.
Finally, the expected cost incurred by following mixed strategy $\pi$ for $S_g(X)$ is exactly equal to $\E[\overline{c}_X(r)]$.
\end{proof}

\begin{proof}[Proof of Theorem \ref{thm:algorithm-optimality}]
We use Algorithm Likelihood-Toss. By Lemma \ref{lem:heavy-coin-large-likelihood}, the optimal adaptive strategy also  minimizes the expected number of tosses to identify a coin $i$ such that the log-likelihood $X_i\ge B$.

The strategy adopted by Algorithm Likelihood-Toss at any stage is to toss the coin with maximum log-likelihood. Let the Markov system associated with the one-dimensional random walk of the log-likelihood function of the history of the coin be $S=(V,P,C,s,t)$. We have infinitely many independent and identical Markov systems $S_1=S_2=\ldots=S$ associated with the log-likelihood function of the respective coin. By Theorem \ref{thm:gittins-strategy}, the optimal strategy to minimize the expected number of tosses to identify a coin $i$ such that the log-likelihood $X_i\ge B$ is to toss the coin $i$ such that $\grade(X_i)$ is minimal. Lemma \ref{lem:grade-likelihood-monotonicity} shows that the grade function $\grade(X)$ is monotonically non-increasing. Thus, tossing the coin with maximum log-likelihood is an optimal strategy.

By the description of the algorithm, it is clear that the algorithm starts tossing a fresh/new coin only if the log-likelihood of the current coin decreases below zero. The time to update the likelihood ratio of the current coin after a coin toss is only a constant and hence the time to identify the coin to toss is $O(1)$.

\end{proof}

\section{Number of Coin Tosses}\label{sec:quantification}
In this section, we give an upper bound on the number of coin tosses performed by Algorithm Likelihood-Toss. The algorithm repeatedly tosses a coin while the log-likelihood of the coin is at least zero and starts with a fresh coin if the log-likelihood of the coin is less than zero. The algorithm terminates if the log-likelihood of a coin is at least $B$.

Consider the random walk of the log-likelihood function. The random walk has absorbing barriers at $B$ and at every state less than $0$. 
\begin{lemma}\label{lem:absorption-bounds}
Let $C$ and $D$ denote the expected number of tosses to get absorbed for a non-heavy and heavy coin respectively. Let $\pi$ denote the probability that a heavy coin gets absorbed at $B$. Then, under the assumptions of Theorem \ref{thm:no-of-tosses},
\begin{enumerate}
\item 
\[
\pi\ge \frac{\Delta_H(p+\eps)-\Delta_T(q-\eps)}{2(\Delta_H+\Delta_T)}.
\]
\item 
\[
\frac{D}{\pi}\le \left(\frac{8B}{\Delta_H(p+\eps)-\Delta_T(q-\eps)}\right)\left(\frac{\Delta_H+\Delta_T}{\Delta_H(p+\eps)}\right).
\]
\item 
\[
C\le \frac{2(\Delta_H+\Delta_T)}{\Delta_T(q+\eps)-\Delta_H(p-\eps)}.
\]
\end{enumerate}
\end{lemma}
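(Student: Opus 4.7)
My plan is to apply Theorem~\ref{thm:gamblers-ruin} to the one-dimensional random walk of the log-likelihood. The walk starts at $X_0=0$, takes a step $+\Delta_H$ on heads and $-\Delta_T$ on tails, and is absorbed if $X_k<0$ (lower) or $X_k\ge B$ (upper). In the notation of Theorem~\ref{thm:gamblers-ruin}, this corresponds to $\mu=\Delta_H$, $\nu=\Delta_T$, $W=B$, and $L=0$, giving $L^*=\Delta_T$ and $W^*=B+\Delta_H$. The first step is to identify $\rho_0$ for each coin type. For a heavy coin ($\prob{H}=p+\eps$), using the identities $e^{-\Delta_H}=(p-\eps)/(p+\eps)$ and $e^{\Delta_T}=(q+\eps)/(q-\eps)$, direct substitution yields $\phi(1/e)=(p-\eps)+(q+\eps)=1$, so $\rho_0=1/e$; moreover $\expec{X}=\Delta_H(p+\eps)-\Delta_T(q-\eps)$ equals the KL-divergence between the heavy and not-heavy laws and is therefore strictly positive. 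The analogous computation for a not-heavy coin yields $\rho_0=e$ and $\expec{X}<0$.

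For Part 1 (lower bound on $\pi$), Part 2 of Theorem~\ref{thm:gamblers-ruin} with $L=0$ yields the vacuous bound $0$, so I would instead apply optional stopping to the bounded martingale $\rho_0^{X_k}$ (with $\rho_0=1/e$) at the absorption time $\tau$, obtaining $1=\expec{\rho_0^{X_\tau}}$. Decomposing by the absorption event and using the pointwise bounds $\rho_0^{X_\tau}\le e^{-B}$ on the event of absorption at $B$ and $\rho_0^{X_\tau}\le e^{\Delta_T}$ on the event of absorption below $0$, I would derive $\pi\ge (e^{\Delta_T}-1)/(e^{\Delta_T}-e^{-B})$, and weaken this into the claimed form using $e^{\Delta_T}-1\ge\Delta_T$ together with the identity for $\expec{X}$.

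For Part 2 (upper bound on $D/\pi$), I would apply Part 4 of Theorem~\ref{thm:gamblers-ruin} with the heavy-coin parameters to get $D \le \frac{B+\Delta_H}{\expec{X}}\cdot\frac{1-e^{-\Delta_T}}{1-e^{-\Delta_T-B}}$, then divide by the Part 1 lower bound on $\pi$; conveniently, the common ratio cancels, yielding $D/\pi\le(B+\Delta_H)/\expec{X}$, which is then weakened to the claimed form using $\Delta_H(p+\eps)\le\Delta_H+\Delta_T$ and the assumption that $B$ dominates $\Delta_H$. For Part 3 (upper bound on $C$), Part 3 of Theorem~\ref{thm:gamblers-ruin} applied to the not-heavy walk (which has negative drift) gives directly $C\le L^*/\abs{\expec{X}}=\Delta_T/\abs{\expec{X}}$, which is already stronger than the claimed $2(\Delta_H+\Delta_T)/\abs{\expec{X}}$.

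The main obstacle will be aligning the clean martingale bound $\pi\ge (e^{\Delta_T}-1)/(e^{\Delta_T}-e^{-B})$ in Part 1 with the specific form $\expec{X}/(2(\Delta_H+\Delta_T))$ stated in the lemma; this requires combining a Taylor-type estimate on $e^{\Delta_T}-1$ with the identity $\expec{X}=\Delta_H(p+\eps)-\Delta_T(q-\eps)$ and the elementary bound $\Delta_H(p+\eps)-\Delta_T(q-\eps)\le\Delta_H+\Delta_T$. The other two bounds follow essentially routinely from Theorem~\ref{thm:gamblers-ruin}.
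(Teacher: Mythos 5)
Your Part 3 is fine (and in fact slightly sharper than the paper's bound), and your exact identification of $\rho_0=1/e$ for a heavy coin and $\rho_0=e$ for a not-heavy coin is a clean observation the paper does not exploit (it only bounds $\rho_0$ by the minimizer $\rho_{\min}$ of $\phi$). However, Part 1 contains a genuine error that also sinks Part 2. From the optional-stopping identity $1=\expec{e^{-X_\tau}}$ and the pointwise \emph{upper} bounds $e^{-X_\tau}\le e^{-B}$ on the high-absorption event and $e^{-X_\tau}\le e^{\Delta_T}$ on the low-absorption event, you get $1\le \pi e^{-B}+(1-\pi)e^{\Delta_T}$; solving for $\pi$ requires dividing by $e^{-B}-e^{\Delta_T}<0$, which flips the inequality and yields $\pi\le (e^{\Delta_T}-1)/(e^{\Delta_T}-e^{-B})$ --- an upper bound on $\pi$, not the lower bound you claim. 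The lower-bound direction of the same identity uses the pointwise lower bounds $e^{-X_\tau}\ge e^{-B-\Delta_H}$ and $e^{-X_\tau}\ge 1$ and collapses to $\pi(1-e^{-B-\Delta_H})\ge 0$, which is vacuous --- for exactly the same reason Theorem~\ref{thm:gamblers-ruin}(2) with $L=0$ is vacuous. Indeed the inequality you assert is generally false: in the symmetric case $p=q=1/2$ one computes $\pi=(1-\lambda)/(1-\lambda^{N+1})$ with $\lambda=e^{-\Delta_T}$ and $N=\lceil B/\Delta_T\rceil$, and since $\lambda^{N+1}\le \lambda e^{-B}$ this is at most $(1-\lambda)/(1-\lambda e^{-B})=(e^{\Delta_T}-1)/(e^{\Delta_T}-e^{-B})$, typically strictly. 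Part 2 then divides the (correct) upper bound on $D$ by this invalid lower bound on $\pi$, and the advertised cancellation --- note that $(e^{\Delta_T}-1)/(e^{\Delta_T}-e^{-B})$ is exactly $(1-e^{-\Delta_T})/(1-e^{-\Delta_T-B})$ --- is precisely what makes your numbers come out; substituting a correct lower bound on $\pi$ (of order $\expec{X}/(2(\Delta_H+\Delta_T))$) in its place does not recover the claimed bound on $D/\pi$ by your route.

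The missing idea is the paper's first move: condition on the outcome of the first toss. With probability $p+\eps$ the heavy coin's walk jumps to $\Delta_H$, and for the walk restarted there the distance to the lower barrier is $L=\Delta_H>0$, so Theorem~\ref{thm:gamblers-ruin}(2) gives the non-vacuous $\pi'\ge (1-\rho_0^{\Delta_H})/(1-\rho_0^{B+\Delta_H})\ge 1-\rho_0^{\Delta_H}$ and hence $\pi=(p+\eps)\pi'$; the expected absorption times are handled by $D\le D'+1\le 2D'$ and $C\le C'+1\le 2C'$ for the shifted walk, and $D'/\pi'$ is bounded directly rather than by bounding $D$ and $\pi$ separately at the unshifted origin. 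Your exact value $\rho_0=1/e$ would actually simplify the paper's remaining estimates (it replaces the convexity argument locating $\rho_{\min}$), so I would encourage you to redo Parts 1 and 2 with the shifted start but keeping that computation.
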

\begin{proof}
Consider a modified random walk where the starting state is $\Delta_H$ as opposed to zero. Let $C'$ and $D'$ denote the expected number of tosses for the modified walk to get absorbed using a non-heavy and heavy coin respectively. Let $\pi'$ denote the probability that the modified walk gets absorbed at $B$ using a heavy coin. Then, $D\le D'+1\le 2D'$, $C\le C'+1\le 2C'$, $\pi=(p+\eps)\pi'$. 

We use Theorem \ref{thm:gamblers-ruin}. For the modified random walk, we have that $L=\Delta_H$, $W=B-\Delta_H$, $\nu=\Delta_T$, $\mu=\Delta_H$. For the modified random walk using a heavy coin, the step sizes are
\begin{equation*}
X=
\begin{cases}
& \Delta_H\text{ with probability $p+\eps$}\\
& -\Delta_T\text{ with probability $q-\eps$},
\end{cases}
\end{equation*}
and for the modified random walk using a non-heavy coin, the step sizes are 
\begin{equation*}
Y=
\begin{cases}
& \Delta_H\text{ with probability $p-\eps$}\\
& -\Delta_T\text{ with probability $q+\eps$},
\end{cases}
\end{equation*}
For $\eps>0$, we have that $\expec{Y}<0$. Therefore, 
\[
C'\le \frac{\Delta_H+\Delta_T}{\Delta_T(q+\eps)-\Delta_H(p-\eps)}
\]
and hence we have the bound on $C$.

Now consider the modified random walk using a heavy coin. For $\eps>0$, we have that $\expec{X}>0$. Let $\rho_0<1$ be the unique real value such that $\expec{\rho_0^X}=1$. Thus,
\begin{align*}
\pi'&\ge \frac{1-\rho_0^{\Delta_H}}{1-\rho_0^{B+\Delta_H}}\\
D'&\le \frac{(\Delta_H+B)}{\expec{X}}\left(\frac{1-\rho_0^{\Delta_H+\Delta_T}}{1-\rho_0^{B+\Delta_T}}\right).
\end{align*}
Since $\phi(\rho)$ is convex, it can be shown that the minimum value of $\phi(\rho)$ occurs at 
\[
\rho_{\min}=\left(\frac{\Delta_T(q-\eps)}{\Delta_H(p+\eps)}\right)^{\frac{1}{\Delta_H+\Delta_T}}
\]
and hence, $\rho_0< \rho_{\min}<1$. Thus,
\begin{align*}
\frac{D'}{\pi'}&\le \frac{(\Delta_H+B)}{\expec{X}}\left(\frac{1-\rho_0^{B+\Delta_H}}{1-\rho_0^{B+\Delta_T}}\right)\left(\frac{1-\rho_0^{\Delta_H+\Delta_T}}{1-\rho_0^{\Delta_H}}\right)\\
&\le \frac{2B}{\expec{X}}\left(\frac{1-\rho_0^{\Delta_H+\Delta_T}}{1-\rho_0^{\Delta_H}}\right)\quad \quad \text{(by the assumption $\delta<\delta_0$)}\\
&< \frac{2B}{\expec{X}}\left(\frac{1-\rho_{\min}^{\Delta_H+\Delta_T}}{1-\rho_{\min}^{\Delta_H}}\right) \quad \quad \text{(since $\rho_0<\rho_{\min}$)}\\
&=\frac{2B}{\Delta_H(p+\eps)}\left(\frac{1}{1-\left(\frac{\Delta_T(q-\eps)}{\Delta_H(p+\eps)}\right)^{\frac{\Delta_H}{\Delta_H+\Delta_T}}}\right)\\
&\le \frac{4B(\Delta_H+\Delta_T)}{\expec{X}\Delta_H}.
\end{align*}
and we obtain the bound on the ratio $D/\pi$. Finally, to lower bound $\pi'$, we observe that
\begin{align*}
\pi'&\ge \frac{1-\rho_0^{\Delta_H}}{1-\rho_0^{B+\Delta_H}}\\
&\ge \frac{1-\rho_{\min}^{\Delta_H}}{1-\rho_{\min}^{B+\Delta_H}}\\
&\ge 1-\rho_{\min}^{\Delta_H}\\
&\ge \frac{\expec{X}}{2(\Delta_H+\Delta_T)(p+\eps)}.
\end{align*}
\end{proof}

\begin{proof}[Proof of Theorem \ref{thm:no-of-tosses}].
We use Algorithm Likelihood-Toss. Consider the one-dimensional random walk of the log-likelihood function. The random walk has absorbing barriers at $B$ and at every state less than $0$. Let $C$ and $D$ denote the expected number of tosses to get absorbed for a non-heavy and heavy coin respectively. Let $\pi$ denote the probability that a heavy coin gets absorbed at $B$. Let $D_0$ and $D_1$ denote the expected number of tosses of a heavy coin to get absorbed at $0$ and $B$ respectively. Then, $D=(1-\pi)D_0+\pi D_1$.

Let $E$ denote the expected number of tosses performed by algorithm Likelihood-Toss. Then,
\begin{align*}
E&\le (1-\alpha)(C+E)+\alpha((1-\pi)(D_0+E)+\pi D_1)\\
\Rightarrow E&\le \frac{(1-\alpha)}{\alpha}\frac{C}{\pi} + \frac{D}{\pi}.
\end{align*}
By Lemma \ref{lem:absorption-bounds}, we have that
\begin{align*}
E&\le
\left(\frac{4(\Delta_H+\Delta_T)}{\Delta_H(p+\eps)-\Delta_T(q-\eps)}\right)
\left(\left(\frac{1-\alpha}{\alpha}\right)\left(\frac{\Delta_H+\Delta_T}{\Delta_T(q+\eps)-\Delta_H(p-\eps)}\right)+
\left(\frac{2B}{\Delta_H(p+\eps)}\right)
\right). 
\end{align*}
The final upper bound follows by substituting for $\Delta_H, \Delta_T$ and $B$ and using the following inequalities (derived by straightforward calculus),
\begin{align*}
\frac{2}{\eps}&\ge\max\left\{\frac{\Delta_H+\Delta_T}{\Delta_H(p+\eps)-\Delta_T(q-\eps)}, \frac{\Delta_H+\Delta_T}{\Delta_T(q+\eps)-\Delta_H(p-\eps)}\right\},\\
\Delta_H&\ge \frac{\eps}{p-\eps}.\\
\end{align*}
\end{proof}

\section{Discussion}
We gave an adaptive strategy that tosses coins in order to achieve a certain stopping condition, namely, the existence of a coin whose posterior probability of being heavy is at least a given threshold. Our strategy has minimum cost where cost is measured by the expected number of future tosses by following the strategy to attain the stopping condition. We achieved this by performing the best possible action after observing the outcome of each coin toss. We note that our algorithm can also be modified to start from any fixed history of outcomes by appropriately modifying the initialization step. The optimality of the action is exhibited using tools from the field of Markov games. A major limitation of our algorithm is that it is optimal only in the setting where the coins are independently heavy and non-heavy. It would be very interesting to devise an adaptive strategy where the coins are not necessarily independent -- say we have $n$ coins with exactly one heavy coin and the goal is to attain the stopping condition. In this setting, we note that the posterior probability of a fixed coin being heavy depends on the outcomes of the tosses of all the coins and not just any fixed coin. \\

\noindent {\bf Acknowledgment}. We thank Santosh Vempala for valuable comments.

\bibliographystyle{abbrv}
\bibliography{references}
\end{document}